\newtheorem{thm}{Theorem}
\newtheorem{defi}{Definition}
\newenvironment{proof}[1][Proof]{\begin{trivlist}
\item[\hskip \labelsep {\bfseries #1}]}{\end{trivlist}}
\newcommand{\captionfonts}{\small}
\long\def\@makecaption#1#2{%
  \vskip\abovecaptionskip
  \sbox\@tempboxa{{\captionfonts #1: #2}}%
  \ifdim \wd\@tempboxa >\hsize
    {\captionfonts #1: #2\par}
  \else
    \hbox to\hsize{\hfil\box\@tempboxa\hfil}%
  \fi
  \vskip\belowcaptionskip}
\begin{document}
\title{Parallel hierarchical sampling: a practical multiple-chains sampler for Bayesian model selection}
\author{Fabio Rigat\footnote{Research fellow, CRiSM, Department of
    Statistics, University of Warwick; f.rigat@warwick.ac.uk}}
\date{May 14th, 2007}
\maketitle
\begin{abstract}
This paper introduces the parallel hierarchical sampler (PHS), a
Markov chain Monte Carlo algorithm using several chains simultaneously.
The connections between PHS and the  parallel tempering (PT) algorithm
are illustrated, convergence of PHS joint transition kernel is proved 
and and its practical advantages are emphasized. We  illustrate  the  inferences
obtained using PHS, parallel tempering and the
Metropolis-Hastings  algorithm  for  three  Bayesian  model  selection
problems, namely Gaussian clustering,  the selection of covariates for
a  linear regression model  and the  selection of  the structure  of a
treed survival  model.\\ \\ \textbf{Keywords}:  multiple-chains Markov
chain   Monte  Carlo   methods,  Bayesian   model   selection,  clustering, 
linear regression,  classification and  regression trees,  survival analysis.
\end{abstract}
\section*{Introduction}
Let  $\theta \in \Theta$  be  a random  variable  with distribution
$\Pi(\theta)$.   Markov  chain  Monte  Carlo (MCMC) 
algorithms generate discrete-time Markov chains  $\{\theta_{i}\}_{i=1}^{N}$
having $\Pi(\theta)$ as their unique stationary distribution
(\cite{RobCas}). 
MCMC  methods were pioneered  by  \cite{Ulam}  and  by
\cite{Metropolis}  in  the  field  of  statistical  mechanics.
They have been adopted in statistics  to approximate numerically 
expectations  of   the  form  $E_{\Pi}(g(\theta))$  where
$g(\cdot)\in  L^{2}(\Pi)$,  i.e.  the  function $g(\cdot)$  is  square
integrable with  respect to $\Pi(\theta)$. In Bayesian
statistics, when the posterior distribution of a
parameter $\theta$ given the data $X$, $\Pi(\theta  \mid  X)$, cannot
be integrated analytically with  respect to its dominating measure,
its relevant features can be approximated
via MCMC (\cite{Tierney}). For  a thorough analysis  of the  published MCMC
algorithms, the  reader may refer  to \cite{GelfSmith}, \cite{SR}, 
\cite{NealTRep}, \cite{Gilks}, \cite{Dani}, \cite{RobCas} 
and \cite{Liu} among others.

This paper illustrates  a novel Markov chain algorithm,  which we find
useful for  sampling from highly multimodal  target distributions.  We
label this  algorithm parallel  hierarchical sampler (PHS)  because of
the prominent  role of one chain  with respect to  the other generated
chains. An important feature of PHS is that one array of Monte Carlo
samples is generated using many chains run in parallel. PHS has in
fact many connections with the Metropolis-coupled Markov chain Monte Carlo  
 samplers of \cite{SWang}, \cite{Geyer} and of \cite{HN}. The main
  advantage of PHS with respect to other samplers is that 
the proposed updates are always accepted, thus ensuring optimal
mixing of the resulting chain. 

In Section  $1$ of this paper we review some foundations of 
MCMC   methods  relevant   for  our work,  with   emphasis   on  the
Metropolis-Hastings (MH) algorithm and on parallel tempering (PT).  In
Section  $2$  we  introduce  the   PHS  algorithm,  we  prove  the
reversibility of its joint transition kernel with respect to its
target  distribution and we illustrate the relationships between PT
and PHS.   Sections $3$ and  $4$ illustrate  two
examples  comparing the  inferences obtained  using the  PHS algorithm
with  those  of  MH  and   PT  within  the  Bayesian  model  selection
framework. The  first example deals  with data clustering.  In the
second example we consider the problem of selecting the best subset of
covariates for a Gaussian  linear regression model.  Section $5$
illustrates  the application  of PHS  for deriving posterior
inferences for the  structure of a treed survival  model.  Section $6$
discusses the  current results and  some ongoing developments  of this
work.
\section{Foundations of MCMC algorithms}
Let $\theta$ be a random variable with  distribution  $\Pi(\theta)$.
In what follows, if f $\theta$  is continuous we let $f(\theta)$ be its probability  density
 with  respect to Lebesgue measure whereas if  it is
discrete $f(\theta)$ is  its probability mass function.
When it is not possible to obtain independent draws from $\Pi(\theta)$, 
Markov chains can be used to generate dependent realisations $\{\theta_{i}\}_{i=1}^{N}$ 
having stationary distribution  $\Pi(\theta)$. Here  
the conditions under which such Markov chains can be
constructed are stated and two algorithms are illustrated.

Let $K(\theta_{i},\theta_{i+1})$  be a transition kernel 
defining the probability to jump between any two values of $\theta$. 
If there exist an integer $d>0$ such that the probability 
of a transition between any two values $(\theta_{i},\theta_{i+1})$
with $f(\theta_{i})>0$ and $f(\theta_{i+1})>0$
in $d$ steps is positive, the transition kernel is $\Pi$-irreducible.
$K(\theta_{i},\theta_{i+1})$ is aperiodic if
it does not entail cycles of transitions among states.
 A sufficient condition for aperiodicity of
an irreducible transition kernel is that $K(\theta_{i},\theta_{i})>0$
for some $\theta_{i} \in \Theta$.
The transition kernel is reversible  with  respect to  $\Pi(\theta)$ if
it satisfies the detailed balance (DB) condition
\begin{eqnarray}
f(\theta_{i})K(\theta_{i},\theta_{i+1}) = f(\theta_{i+1})K(\theta_{i+1},\theta_{i}).
\end{eqnarray} 
If a reversible and $\Pi$-irreducible transition kernel is also aperiodic,
$\Pi(\theta)$ is its unique stationary distribution
(\cite{Nummelin}, \cite{RobCas}).   In such  case,  the  strong law of
large numbers  holds for any  function $g(\cdot) \in  L^{2}(\Pi)$,
that is
$$
\lim_{N \to \infty} \left(\frac{1}{N}\sum_{i=1}^{N}g(\theta_{i})\right)=E_{\Pi}(g(\theta))\mbox{ a.s. }
$$
Furthermore, under these  conditions also the central limit theorem
holds so that (\cite{Tierney})
$$
\sqrt{N}\left (\frac{1}{N}\sum_{i=1}^{N}g(\theta_{i})-E_{\Pi}(g(\theta))\right )\stackrel{d}{\rightarrow} N(0,\sigma_{g,K}^{2}),
$$
where the  asymptotic standard deviation
$\sigma_{g,K}$  depends on  the  function  $g(\cdot)$   and  on  the
transition  kernel  (\cite{Mira3}).
\subsection{The Metropolis-Hastings algorithm}
The Metropolis-Hastings algorithm (\cite{MH})
implements a family of transition kernels reversible with respect to
an arbitrary target distribution $\Pi(\theta)$.
Markov chains are generated by the MH algorithm
by converting  the independent  draws from
a proposal  distribution $q(\cdot)$  into dependent  samples from
$\Pi(\theta)$ through a  simple  accept/reject mechanism
(\cite{Greenberg}, \cite{BilleraDiac}). Without  loss of generality, in  what follows we
let $q(\cdot)$ assign zero probability to the current state $\theta_{i}$. Under this
condition, the MH transition kernel can be written as
\begin{eqnarray}
K_{MH}(\theta_{i},\theta_{i+1}) = \alpha_{MH}(\theta_{i},\theta_{i+1})q(\theta_{i+1}\mid \theta_{i})\text{ if }\theta_{i+1}\neq \theta_{i},
\end{eqnarray}
where the acceptance ratio $\alpha_{MH}(\theta_{i},\theta_{i+1})$ is
defined as 
\begin{eqnarray}
\alpha_{MH}(\theta_{i},\theta_{i+1}) = 1 \wedge \frac{f(\theta_{i+1})q(\theta_{i}\mid \theta_{i+1})}{f(\theta_{i})q(\theta_{i+1}\mid \theta_{i})}.
\end{eqnarray}

The irreducibility  and aperiodicity of
the  MH  transition  kernel and its practical performance for specific
sampling problems hinge mainly  on the appropriate choice  of  the  proposal
distribution  $q(\cdot)$ (\cite{Tierney}). Reversibility with respect to $f(\theta)$
can be immediately verified by checking  that $(1)$ holds using equations $(2)$ and $(3)$.
Furthermore, by equation $(3)$ the MH  algorithm requires the evaluation
of the function
$f(\theta)$ only  up to a multiplicative constant, which is a
key feature for Bayesian applications where the target posterior distribution
is typically known up to a finite multiplicative factor (\cite{GelfSmith}).
\subsection{Parallel tempering}
Parallel tempering (PT) is a multiple-chains  extension of the
MH algorithm  which can improve mixing 
when the MH sample  paths exhibit  high correlations  (\cite{Geyer}).
Poor mixing of the MH algorithm typically arises
when  the target distribution  is multimodal. As  emphasized by
\cite{SWang}, \cite{HN} and
\cite{Liu}, in  statistical mechanics such distributions  may arise in
the analysis  of stochastic interacting particle systems,  such as spin glasses
and the Ising model. In Bayesian statistics multimodality of the
posterior distribution might  occur when an informative prior disagrees with the likelihood,  
in highly structured hierarchical models and when 
several nuisance parameters are integrated out 
of the joint posterior. An impotant example of the latter case
is provided by Bayesian model  selection methods using the marginal posterior
probability  of the model  structure.

The PT algorithm appears in the literature in different forms and under different
labels. \cite{SWang} introduced the ``Replica Monte Carlo'' algorithm,
\cite{Geyer} labeled his algorithm ``Metropolis-coupled MCMC'', whereas
the algorithm of \cite{HN} is labeled ``Exchange Monte Carlo''.
Here we give a unified description of PT
encompassing those of Geyer, Liu and Hukushima and Nemoto.

Let $1  = T_{1} \leq T_{2} \leq ...\leq T_{M} <  \infty$ be a fixed real-valued vector
of temperature levels.  For each value of the index $m \in [1,M]$ a
heated version of the target density $f(\theta)$ is defined by
``powering up'' $f(\theta)$ as
\begin{eqnarray}
f_{m}(\theta) = \frac{f(\theta)^{\frac{1}{T_{m}}}}{C_{m}},
\end{eqnarray}
where $C_{m}  > 0$ is a  finite normalising constant depending  on the temperature
parameter  $T_{m}$.  The latter  acts  as  a  smoother of  the  target
distribution  of the  cold chain, which has temperature one, so that  the heated  densities have
fatter   tails   and   less   pronounced   modes   with   respect   to
$f_{1}(\theta)$.  The PT sampler  proceeds, at  each iteration, by alternating
an \emph{update} step with a \emph{swap} step. The former is carried out by updating
each chain independently of the others typically via the Gibbs sampler using one or more embedded MH steps.
To perform the \emph{swap} step, let $s_{i}$ have value $0$ if \emph{update}
is chosen at iteration $i$ and $1$ if \emph{swap} is chosen instead.
The proposal probability $q^{'}_{s}(s_{i} \mid s_{i-1})$ describes how
the two steps are combined by the sampler. \cite{Geyer}
adopts the deterministic proposal $q^{'}_{s}(s_{i} \mid s_{i-1}) =
1_{\{s_{i-1} = 0 \}}$, whereas \cite{Liu} defines an independent PT sampler using
$q^{'}_{s}(s_{i} \mid s_{i-1}) = s$ where $s \in (0,1)$ is a fixed
swap proposal rate.
Let the indexes $j_{i}$ and $k_{i}$ range over $(1,...,M)$ and let
$\theta_{i}^{j_{i}}$ indicate the state of chain $j_{i}$ at iteration $i$.
The second proposal employed by the PT sampler, $q^{''}_{s}(\theta_{i}^{j_{i}},\theta_{i}^{k_{i}})$
defines the probability that at iteration $i$ a swap is attempted
between the current values of the chains with indexes $(j_{i},k_{i})$.
In \cite{Geyer}, in \cite{HN} and in \cite{Liu} this proposal
is taken as independent of the current states of the two chains
$\theta_{i}^{j_{i}},\theta_{i}^{k_{i}}$ but only
dependent on thei indexes $(j_{i},k_{i})$. Specifically, the swap
proposal used by these authors is uniform over all possible values of the ordered couple $(j_{i},k_{i})$
with $k_{i} \neq j_{i}$ and a swap is accepted
with probability
\begin{eqnarray}
&\alpha_{s}([\theta_{i}^{j_{i}},\theta_{i}^{k_{i}}],[\theta_{i}^{k_{i}},\theta_{i}^{j_{i}}]) = 1 \wedge\frac{f_{j_{i}}(\theta_{i}^{k_{i}})f_{k_{i}}(\theta_{i}^{j_{i}})}{f_{j_{i}}(\theta_{i}^{j_{i}})f_{k_{i}}(\theta_{i}^{k_{i}})},
\end{eqnarray}
ensuring the reversibility of the PT sampler with respect to its joint target distribution.
When the independent updates of each chain are carried out using a single MH step,
the joint transition kernel of the PT sampler is
\begin{eqnarray}
K_{PT}(\theta_{M,i},\theta_{M,i+1}) =
(1-q^{'}_{s}(s_{i} \mid s_{i-1}))\prod_{w=1}^{M}q(\theta^{w}_{i+1} \mid \theta^{w}_{i})\alpha_{MH}(\theta^{w}_{i},\theta^{w}_{i+1}) + \nonumber \\
 + q^{'}_{s}(s_{i} \mid s_{i-1}) \sum_{j_{1} = 1}^{M}\sum_{\stackrel{k_{i} = 1}{k_{i} \neq j_{i}}}^{M}q^{''}_{s}(j_{i},k_{i})
\alpha_{s}([\theta_{i}^{j_{i}},\theta_{i}^{k_{i}}],[\theta_{i}^{k_{i}},\theta_{i}^{j_{i}}]).
\end{eqnarray}
 where
$\theta_{M,i} = [\theta_{i}^{1},...,\theta_{i}^{M}]$ is the state of all $M$ chains at iteration $i$.
From $(6)$ it can be seen that when the within-chain  updates  produce
high correlations, PT increases mixing for all chains through their successful
swaps.  Analogously to the MH  algorithm, the irreducibility
and aperiodicity  of the PT  transition kernel depend mainly on  the proposal
distribution for the within-chains update, $q(\cdot)$ and on that of
the cross-chains swaps $q^{''}_{s}(\cdot)$.  
A proof of the reversibility of the PT algorithm
can  be found  in \cite{HN}.  Finally  we note  that,
analogously  to the  MH algorithm,  by equations  $(4)$ and  $(5)$ the
implementation  of PT does  not require  knowledge of  the finite normalising
constants $\{C_{m}\}_{m=1}^{M}$ so that it is a suitable MCMC sampler
for Bayesian posterior simulation.
\section{The parallel hierarchical sampler}
A key difficulty affecting the general applicability of the PT sampler is 
its dependence on the values of the temperatures
$\{T_{m}\}_{m=1}^{M}$. In statistical mechanics, the latter are chosen
with reference to the physical properties of the systems being
modeled, such as the energy barriers implied by successive temperature
levels. However, in statistics the equilibrium distributions being
simulated seldom possess analogous interpretations. An alternative
solution illustrated in this Section is to employ a multiple chains 
sampler such that the equilibrium distributions of all chains is the 
same but the proposal distribution used to update each chain is different. 
Specifically, definition $1$ describes a multiple-chains 
sampler which does not employ temperatures and combines independent
updates with swap moves within each iteration.
\begin{defi}
Let a multiple-chains MCMC sampler proceed
by carrying out both the following two steps at each iteration:
 \begin{itemize}
\item[i)] let the index $m_{i}$ be drawn from a discrete proposal distribution
$q^{''}_{s}(m_{i} \mid m_{i-1})$ symmetric with respect to its arguments;
\item[ii)] swap the current value of chain $m_{i}$ and that of the first chain;
\item[iii)] update independently the remaining $M-2$ chains each having
  the same marginal target distribution $f(\theta)$.
\end{itemize}
\end{defi}
At point $ii)$ above, we indicate as $q^{''}_{s}(\cdot)$ the swap
proposal to emphasize the analogy with the PT algorithm.
We label the algorithm defined above parallel hierarchical sampler (PHS) because 
the first chain is given a prominent role and the update of all chains 
is carried out in parallel analogously to PT. To provide a simple
proof of the reversibility of the PHS joint kernel,
in this Section we assume that the chains $(2,..M)$ are updated
using a single MH step and that the transition  kernels
for these MH updates satisfy the conditions
illustrated in \cite{Tierney} so that they
are irreducible and aperiodic with respect to  their marginal target
distributions. In addition, we assume that the symmetric proposal 
distribution $q^{''}_{s}(\cdot)$ allows for swaps between the first
chain and any of the other chains. Under these conditions the marginal transition kernel for the first chain of the PHS algorithm
is irreducible and aperiodic with respect to its target distribution.
Let $\theta_{M} = \theta \times \theta \times...\times \theta$ be the $M$-fold cartesian product of
the random variable $\theta$. By the arguments of Section $1$, if the PHS joint transition
kernel is also reversible with respect to the product density
$\mu(\theta_{M})$ having all marginals equal to $f(\theta)$, then 
$\mu(\theta_{M})$  is the unique joint stationary distribution of the sampler.
The reversibility of  the PHS is proved  in the following theorem.
\begin{thm}
The joint transition kernel  of the PHS algorithm of Definition $1$ is reversible
with respect to the joint distribution having product density or probability mass function $\mu(\theta_{M})$.
\end{thm}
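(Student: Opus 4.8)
The plan is to exploit the product structure of $\mu(\theta_{M})$ together with the observation that, conditionally on the index $m$ drawn in step i), the PHS update acts by non-interacting mechanisms on two disjoint blocks of coordinates. First I would fix the selected index $m$ and write the corresponding conditional transition kernel on the joint state as
$$
K_{m}(\theta_{M,i},\theta_{M,i+1}) = \delta_{\theta_{i}^{m}}(\theta_{i+1}^{1})\,\delta_{\theta_{i}^{1}}(\theta_{i+1}^{m})\prod_{w \notin \{1,m\}} K_{MH}(\theta_{i}^{w},\theta_{i+1}^{w}),
$$
that is, a deterministic transposition of coordinates $1$ and $m$ (encoded by the two Dirac masses, following steps ii) of Definition $1$) composed with the product of the $M-2$ independent MH kernels of step iii), each of which satisfies the detailed balance condition $(1)$ with respect to $f$. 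Because the coordinate blocks $\{1,m\}$ and $\{2,\dots,M\}\setminus\{m\}$ are updated separately and $\mu(\theta_{M})=\prod_{w} f(\theta^{w})$ is a product measure with identical marginals, detailed balance for $K_{m}$ can be checked factor by factor.

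The next step is to verify $\mu(\theta_{M,i})\,K_{m}(\theta_{M,i},\theta_{M,i+1}) = \mu(\theta_{M,i+1})\,K_{m}(\theta_{M,i+1},\theta_{M,i})$ by direct substitution. On the left-hand side the Dirac masses force $\theta_{i+1}^{1}=\theta_{i}^{m}$ and $\theta_{i+1}^{m}=\theta_{i}^{1}$; since a transposition is an involution, the right-hand side is supported on exactly the same set, so the two Dirac factors coincide. Factoring $\mu(\theta_{M,i}) = f(\theta_{i}^{1})f(\theta_{i}^{m})\prod_{w\notin\{1,m\}} f(\theta_{i}^{w})$ and using the constraint, the swapped factors $f(\theta_{i}^{1})f(\theta_{i}^{m})$ are invariant under the exchange, so the identity collapses to
$$
\prod_{w \notin \{1,m\}} f(\theta_{i}^{w})\,K_{MH}(\theta_{i}^{w},\theta_{i+1}^{w}) = \prod_{w \notin \{1,m\}} f(\theta_{i+1}^{w})\,K_{MH}(\theta_{i+1}^{w},\theta_{i}^{w}),
$$
which holds termwise because each MH kernel is reversible with respect to $f$, as recalled in Section $1$. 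Hence $K_{m}$ is reversible with respect to $\mu(\theta_{M})$ for every admissible $m$.

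Finally I would reassemble the one-step PHS kernel as the mixture $K_{PHS}(\theta_{M,i},\theta_{M,i+1}) = \sum_{m} q^{''}_{s}(m\mid m_{i-1})\,K_{m}(\theta_{M,i},\theta_{M,i+1})$, whose weights are non-negative, sum to one, and do not depend on the joint states. Multiplying the detailed balance identity for each $K_{m}$ by $q^{''}_{s}(m\mid m_{i-1})$ and summing over $m$ preserves it, so $K_{PHS}$ satisfies $(1)$ with respect to $\mu(\theta_{M})$; the symmetry of $q^{''}_{s}$ assumed in Definition $1$ makes this step transparent and, in addition, renders the auxiliary index chain reversible with respect to the uniform law on $\{2,\dots,M\}$, so that the argument extends to the chain augmented by the index. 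Combined with the irreducibility and aperiodicity noted just before the statement, this identifies $\mu(\theta_{M})$ as the unique joint stationary distribution.

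The one genuinely delicate point, and where I would take the most care, is the bookkeeping around the deterministic swap: one must check that the Dirac masses on the two sides of the detailed balance equation are supported on the same set — which is precisely the involution property of a transposition — and that the mixed situation in which some coordinates $\theta^{w}$ are continuous and others discrete is handled. The latter causes no difficulty, since the argument only ever invokes detailed balance of the component kernels and the invariance of a product measure under permutation of its identically distributed coordinates; the remaining manipulations are routine.
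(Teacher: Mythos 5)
Your proof is correct and follows essentially the same route as the paper's: decompose the one-step kernel as a mixture over the swap index $m$, verify detailed balance termwise using the reversibility of each embedded MH kernel with respect to $f$, and invoke the symmetry of $q^{''}_{s}$ to match the mixture weights on the two sides. If anything, your version is the more carefully written one, since you represent the deterministic swap explicitly via Dirac masses and account for the invariance of the factor $f(\theta^{1})f(\theta^{m})$ under the transposition, details the paper's equations $(8)$--$(9)$ leave implicit.
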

\begin{proof}
The DB condition for the PHS algorithm is
\begin{eqnarray}
\frac{\mu(\theta_{M,i})}{\mu(\theta_{M,i+1})}=
\frac{K_{PHS}(\theta_{M,i+1},\theta_{M,i})}{K_{PHS}(\theta_{M,i},\theta_{M,i+1})},
\end{eqnarray}
where $K_{PHS}(\theta_{M,i+1},\theta_{M,i})$ is the PHS joint transition kernel.
When the independent updates of the chains $(2,...,M)$
are carried out via a MH step, the PHS joint transition kernel can be
written explicitely as
\begin{eqnarray}
K_{PHS}(\theta_{M,i},\theta_{M,i+1}) =
\sum_{m_{i}=2}^{M}q^{''}_{s}(m_{i} \mid m_{i-1}) \prod_{\stackrel{j =2}{j \neq m_{i}}}^{M}q(\theta^{j}_{i+1} \mid \theta^{j}_{i})\alpha_{MH}(\theta_{i}^{j},\theta_{i+1}^{j}).
\end{eqnarray}
Each summand in $(8)$ is the product of the marginal transition kernel
for the swap transition and those of the $(M-2)$ independent MH
updates for the remaining chains. The former 
coincides with the proposal $q^{''}_{s}(m_{i} \mid m_{i-1})$
because the PHS swap acceptance ratio is equal to one.
This fact will be motivated in the next Section by illustrating the
relationship between PHS and PT. 
Under $(8)$ the DB condition $(7)$ can be rewritten as
\begin{eqnarray}
\sum_{m_{i}=2}^{M}q^{''}_{s}(m_{i} \mid m_{i-1}) \prod_{\stackrel{j =2}{j \neq m_{i}}}^{M}q(\theta^{j}_{i+1} \mid
\theta^{j}_{i})\alpha_{MH}(\theta_{i}^{j},\theta_{i+1}^{j}) =
    \nonumber \\
= \sum_{m_{i}=2}^{M}q^{''}_{s}(m_{i-1} \mid m_{i}) \prod_{\stackrel{j =2}{j \neq m_{i-1}}}^{M}q(\theta^{j}_{i} \mid
\theta^{j}_{i+1})\alpha_{MH}(\theta_{i+1}^{j},\theta_{i}^{j})
\end{eqnarray}
For any given value of $m_{i}$, by the reversibility of $(2)$ and
$(3)$ with respect to $f(\theta)$, the $M-2$ MH transition probabilities
on the left-hand side of $(9)$ are equal to their
corresponding terms on the right-hand side. By taking
$q^{''}_{s}(\cdot)$ symmetric with respect to $m_{i}$ and $m_{i-1}$, 
for all values of $m_{i}$ each summand on the left-hand side of $(9)$
equals its corresponding term on the right-hand side, so that
the equality $(9)$ holds.\hspace{0.5in}$\diamond$
\end{proof}
Equation $(8)$ implies that, as for the MH and PT algorithms, PHS
does   not   require    knowledge   of   the normalising  constant
of its marginal target distributions $C$ so that it is suitable for sampling from target
distributions known only up to a finite multiplicative factor.
\subsection{Relationship between PHS and parallel tempering}
Both $(6)$ and $(8)$ are mixtures of marginal transition kernels
respectively defining the joint transition probabilities for the PT
and PHS algorithms. The analogy between the two is that
$(M-1)$ out of the $M$ parallel chains are auxilliary and Monte Carlo estimates
are computed using the samples of the first chain only. 
There are two important differences between the two samplers. 
At each iteration, the PHS transition kernel mixes over the update and
swap steps as described in Definition $1$ whereas in PT they
are alternated according to the proposal probability $q^{'}_{s}(s_{i}
\mid s_{i-1})$. Since the former step typically generates local transitions
whereas the latter produces larger jumps, PT
creates unnecessary competition between local and global
mixing. Furthermore, in PHS all marginal
target distributions are not powered up using a temperature
coefficient as in PT. The rationales to avoid the
temperature coefficients are both conceptual and practical.
From a Bayesian prespective, the main conceptual issue is
that the temperatures do not appear neither in the likelihood function
nor in the prior, so that it is not clear whether they should be treated analogously to the
other parameters indexing the target posterior distribution. In practice,
determining sensible values for the temperatures requires a lenghty
trial-and-error process in the pursuit of a target swap rate between
pairs of chains. Moreover,
since the normalising constants of the marginal posterior
distributions depend on their temperatures, updating of the
latter is not possible unless for conjugate
families. The simulated tempering algorithm of \cite{GThompson} 
implements a single chain sampler for both the parameter $\theta$ and
a single temperature coefficient $T$. The latter is
treated as a discrete random variable and its update is 
carried out using a data dependent pseudo-prior in order to simplify
the normalising constant of the joint posterior distribution from the
Metropolis-Hastings acceptance ratio. An interesting point in
simulated tempering is that the temperature is not constrained to be larger than
one, so that when $T$ is close to zero the posterior distribution
becomes concentrated around its modes. However, although practically useful, the
simulated tempering algorithm does not clarify the nature
of the temperature parameter and it does not explain how
the posterior normalising constant could be seen as part of a prior
distribution for the same parameter. 

In PHS, since all temperatures have value $1$, the Metropolis swap
acceptance ratio $(5)$ is equal to one, so that the proposed moves
for the first chain are always accepted. This property marks the most
evident difference between the sample paths of the first chain of PHS,
those of the cold chain of PT and those of the MH algorithm. 
\subsection{PHS as a variable augmentation scheme}
Variable augmentation for MCMC samplers was first introduced by
\cite{TWong}. Its general principle is that convergence of one of the
generated chains can be sped up by cleverly augmenting the
state-space using additional coefficients. Conditionally on these
auxilliary variables the posterior distribution of the parameters of interest can typically
be sampled exactly. The PHS algorithm can be seen as a 
variable augmentation scheme where the additional coefficients are 
$M-1$ replicates of the parameter of interest itself. In PHS
the target distribution of the first chain does not 
depend on the other replicates. At each iteration, the $M-1$
auxilliary chains having index $(2,..,M)$ directly provide a set of
potential updates for the chain of interest.
\subsection{PHS and multiple-try Metropolis algorithms}
\cite{mtry} illustrate a generalization of the single chain Metropolis
algorithm where multiple values from the same proposal distribution
are drawn at each iteration of the sampler. To attain detailed
balance, a generalized Metropolis acceptance ratio involving several
pseudo-current chain states is computed. This multiple-try generalized 
Metropolis sampler actually mimics the behaviour of a multiple chains
algorithm using the sme proposal within each of the generated chains.
Therefore, the main analogy between the algorithm of \cite{mtry} and
PHS is that many candidates are available at each iteration to
 update one chain of interest. In \cite{mtry}, only one of such
 updates is retained and the Metropolis ratio is modified
 accordingly. In PHS, all such values not used for swapping with the
 first chain are retained and individually updated. Moreover, the
 proposal mechanism generating all potential updates is not
 constrained to be the same for all chains.
 \section{An illustrative example: MCMC generation of mixtures of Gaussian variates}
In this Section we report a comparison between the empirical
performance of MH and PHS algorithms for generating a sample from
a mixture of scalar Gaussian random variables. We use the results of
one simulation to illustrate the typical difference between the
performance of the two samplers.

Within the MH algorithm we use a random walk uniform proposal
distribution on the interval $(\theta_{i}-\delta,\theta_{i}+\delta)$,
where $\theta_{i}$ is the current value of the chain. We construct a PHS
algorithm using the same Metropolis updates within chains $(2,...,M)$
and by adopting a uniform swap proposal distribution $q^{''}(m_{i}
\mid m_{i-1}) = \frac{1}{M-1}$. For this example we let $M = 10$, that is we employ
nine auxilliary chains having the same proposal spread
$\delta = 1$ as that of the MH sampler. The PHS sampler was run for one
hundred thousand iterations. In order to make the computational cost
for both samplers comparable, the MH algorithm was run for one million iterations.
All chains were started at the same initial value equal to zero.

The number of components of the mixture was set to $5$, 
their means were generated uniformly at random over the interval $(-10,10)$
obtaining the values $(-8.85,-2.65,2.63,3.85,4.35)$. Their standard
deviations were generated uniformly at random over the interval
$(0.1,1)$, obtaining the values $(0.18, 0.51, 0.50,0.42,0.24)$.
Finally, the unnormalised weights of the mixture components were drawn uniformly at
random over the interval $(1,5)$. Thier normalised values are $(0.22,0.22,0.23,0.15,0.18)$.
Figure $1$ shows the probability density of the mixture over the range
$(-13.5,6.6)$. The mixture components having means $(2.63,3.85)$ and
standard deviations $(0.50,0.42)$ are very close and they do not
result in two separate modes of the mixture density. Figure $2$ compares the histograms
of the MH draws with that of the first PHS chain. The former sampler
effectively located the three closest modes to its starting value
whereas PHS successfully visited all four modes of its target
distribution. 
\begin{center}
  \begin{figure}[htbp]
\hspace{1in}
   \epsfig{file=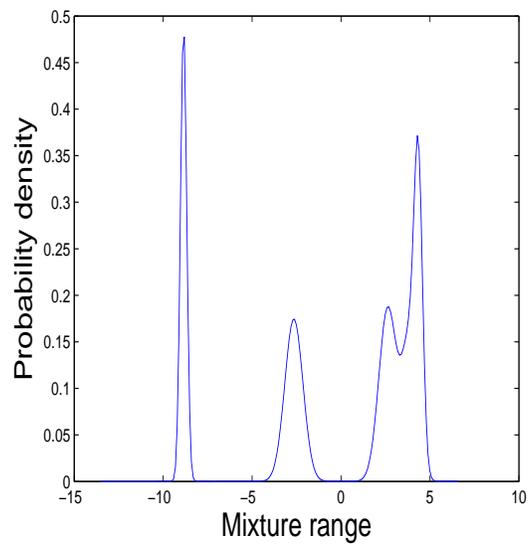,width=3in,height = 3in,angle=0}
    \caption{the Gaussian mixture density used as target distribution
    for the Metropolis sampler and for the parallel hierarchical
    sampler. The distribution is a mixture of five Gaussian components
    having means $(-8.85,-2.65,2.63,3.85,4.35)$, standard deviations
    $(0.18, 0.51, 0.50,0.42,0.24)$ and weights $(0.22,0.22,0.23,0.15,0.18)$.}
  \end{figure}
\end{center}
\begin{center}
  \begin{figure}[htbp]
   \epsfig{file=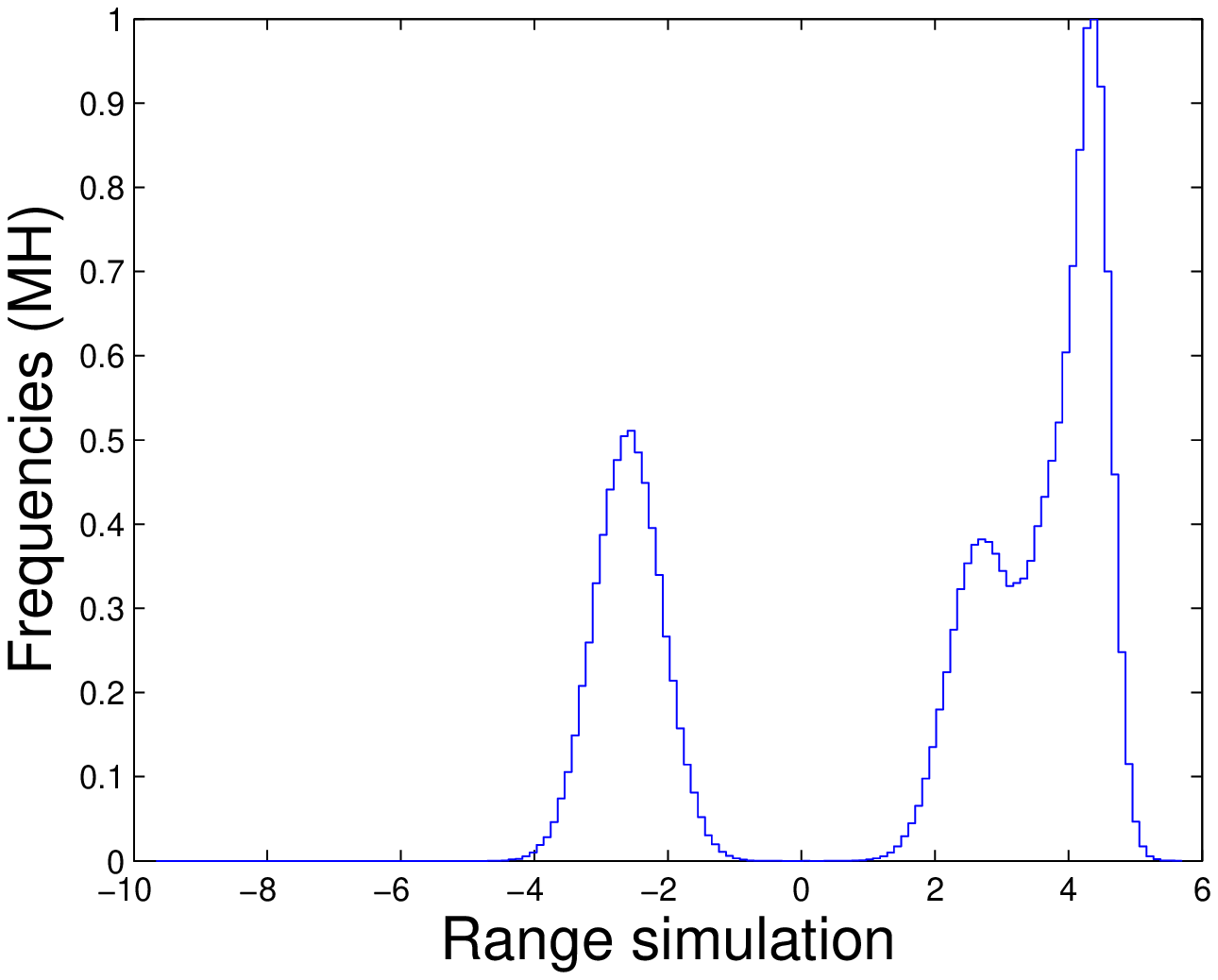,width=2.4in,height = 2.4in,angle=0}
   \epsfig{file=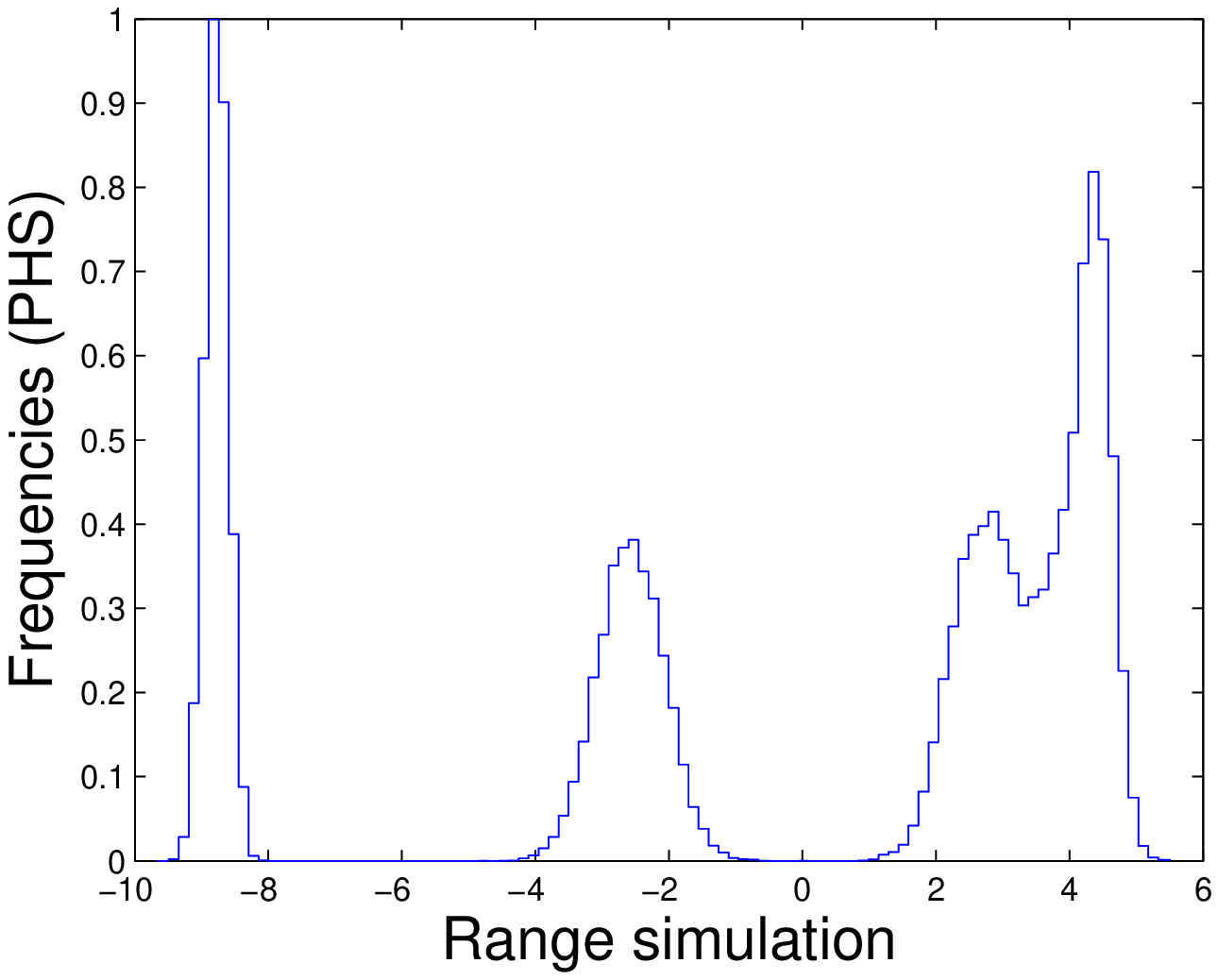,width=2.4in,height = 2.4in,angle=0}
    \caption{the plot on the left-hand side shows the histogram of the
    Metropolis draws. On the right-hand side, the plot represent the
    histogram of the draws of the first PHS chain. Thanks to the
    swapping mechanism, the latter successfully visited all the four
    modes of its marginal target distribution whereas the Metropolis
    algorithm only visited the three closes models to its starting value.}
  \end{figure}
\end{center}
\section{Application to the selection of covariates for the Bayesian linear regression model}
The covariates  selection problem for  the
Bayesian Gaussian linear regression  has been
addressed model using MCMC methods  by \cite{Mitchell}, \cite{SmithKohn}, \cite{SSVS},   \cite{CarlinChib95},
\cite{GeorgeMcCull97},       \cite{Raft97},       \cite{KuoMallick98},
\cite{DellapForster02}  and  \cite{clydegeorge04}  among many  others.

Using the same notation as in \cite{GeorgeMcCull97}, we let  the
distribution  of  the $n$-dimensional  random  vector $Y$  be
multivariate   Gaussian  with   mean   $X_{\gamma}\beta_{\gamma}$  and
covariance   matrix   $\sigma^{2}I_{n}$,   being
$(\sigma,\beta,\gamma)$  a priori  unknown. The  $p$-dimensional model
index $\gamma$ has elements $\gamma_{j}$ taking value one if the $j$th
covariate is  used for  the computation  of the mean  of $Y$  and zero
otherwise. Here $\beta_{\gamma}$ and $X_{\gamma}$ include respectively
the elements  of the $p$-dimensional column  vector $\beta$ associated
to non-zero  components of $\gamma$  and the corresponding  columns of
$X$.  The  latter is  a  real-valued $n  \times p$  matrix
representing $p$ potential predictors for the mean of $Y$.  Within this framework,
the  variable selection  problem consists  of deriving  inferences for
$\gamma$ conditionally on the data  $(Y,X)$.  In order to compute such
inferences, we employ MH, PT and PHS to  generate  draws  from  the marginal  posterior
probability of the model index,
 $$
 P(\gamma \mid Y,X) \propto P(\gamma)P(Y \mid \gamma,X).
 $$
In this
Section we adopt  the same form of the  marginal posterior probability
of $\gamma$ as in \cite{Nott}, letting
\begin{eqnarray}
 P(\gamma \mid Y,X) \propto (1+n)^{-\frac{S(\gamma)}{2}}\left(Y^{'}Y -\frac{n}{n+1}Y^{'}X_{\gamma}(X^{'}_{\gamma}X_{\gamma})^{-1}X^{'}_{\gamma}Y \right)^{-\frac{n}{2}}
\end{eqnarray}
We also note that if  the  predictors included  in
$X_{\gamma}$     tend     to      be     collinear,     the     matrix
$X^{'}_{\gamma}X_{\gamma}$ can  be almost singular and  the right hand
side  of  equation  $(11)$  may  be  numerically  unstable.   In  such
instances,  we find that  computing the  marginal posterior  using the
Cholesky   decomposition    of   $X^{'}_{\gamma}X_{\gamma}$,   as   in
\cite{SmithKohn}, yields numerically stable results.

In order  to compare the PHS estimates  with that of
the MH and of the PT algorithms,  we consider two simulated datasets.
In both cases the dependent data
is  $Y \sim N(X_{\gamma}\beta_{\gamma},6.25I_{180})$ and the regression
coefficients   are   set  at   $\beta_{\gamma_{j}} = 2j/15$ for $j = 1,...,15$.
For the first dataset, $X$ is generated as a
$180 \times 15$ matrix of \emph{i.i.d.} draws from a Normal distribution
with mean zero and variance $1$. Let  $Z_{1},...,Z_{16}$ be  $i.i.d$
Gaussian column vectors of length  $180$ with mean zero and covariance
matrix $I_{180}$. For the second dataset a strong collinearity was induced among
the predictors $X$ by letting
\begin{eqnarray*}
X_{j} &=& Z_{j} + 2Z_{16} \mbox{ for }j = 1,...,15,
\end{eqnarray*}
where $X_{j}$ is the $j$th column of $X$, as in Section $5.2.1$ of \cite{GeorgeMcCull97}. Here we will compare the estimation results
of the MH algorithm with those of the cold chain of PT and of PHS for the two
datasets using the estimated marginal posterior
inclusion probabilities for each predictor and their Monte Carlo standard  errors (MCSEs).
The former are defined as $\bar{\gamma_{j}} = \sum_{i=1}^{N}\gamma^{i}_{j}/N$
where $i = 1,...,N$ is the iteration index and $\gamma^{i}_{j}$
is the $i$th draw for the $j$th predictor. As illustrated by  \cite{Geweke}, \cite{Nott} and by \cite{GeorgeMcCull97},
the MCSE  for the inclusion probability of the $j$th predictor is
$$
MCSE(\bar{\gamma}_{j}) = \sqrt{\frac{1}{N}\sum_{|h|<N}\left(1-\frac{|h|}{N} \right)A_{j}(h)},
$$
where $A_{j}(h)$ is the lag $h$ autocovariance of the chain of realisations for $\gamma_{j}$. For
ergodic Markov chains, as $N \rightarrow \infty$ the  MCSE converges, up to
an additive  constant independent of the transition kernel,
to the MCMC standard  error $\sigma_{g,K}$ (\cite{Mira3}) where
$g(\gamma_{j}) = E(\gamma_{j} \mid Y,X)$ for this example.

Three independent  batches of  chains were run for  fifty thousand
iterations.  For PT and PHS, we used nine  chains which target  distributions are  defined as in
$(4)$ with  cold distribution  $(11)$. For PT, the heated chains were defined using the same  array of  equally spaced
temperatures with range $1-5$. For each sampler, the starting values of $\gamma$ for
all  chains was the null model.  All within-chain updates  were  carried  out using  a
component-wise random  scan Metropolis algorithm proposing a change of
the current  value of each parameter $\gamma_{j}$ at
every  iteration,  as  in \cite{Denison98}. The cross-chains proposals $q_{s}(\cdot)$ and $q^{''}_{s}(\cdot)$ were taken uniform. Since the PT algorithm
also depends on the proposal $q^{'}_{s}(\cdot)$, we run three batches of PT chains
using Liu's proposal $q^{'}_{s}(s_{i} \mid s_{i-1}) = s$ with $s = 0.2,0.5,0.8$.

Figure $5$ illustrates the simulation results. The plots on the top row refer to the data
without collinearity whereas the plots on the bottom report the inferences for
the data with collinearity. By comparing the two rows of Figure $5$,
it appears that the induced collinearity among the predictors did not affect the estimation results
for any of the samplers. The estimated inclusion probabilities are generally
increasing with respect to the true value of their regression parameters $\beta_{\gamma}$ for all samplers
and for both datasets, with a noticeable shift occurring between the fifth and the sixth predictors
(which regression parameters are respectively $\beta_{\gamma_{5}} = 0.67$ and $\beta_{\gamma_{6}} = 0.8$).
Higher swap rates for the PT algorithm, marked by circles and by plus signs
in Figure $5$, result in a large decrease in the estimated inclusion probabilities for
the predictors corresponding to large regression coefficients $\beta_{\gamma}$.
The estimated inclusion probabilities for the predictors associated to low values of the regression coefficients
are the lowest for the MH algorithm whereas the PHS estimates are the highest
for these predictors. Finally, the plots on the right-hand side of Figure $5$ suggest that for this example the precision
of the three samplers, as measured by their MCSEs, is roughly comparable.
\begin{center}
  \begin{figure}[htbp]
   \epsfig{file=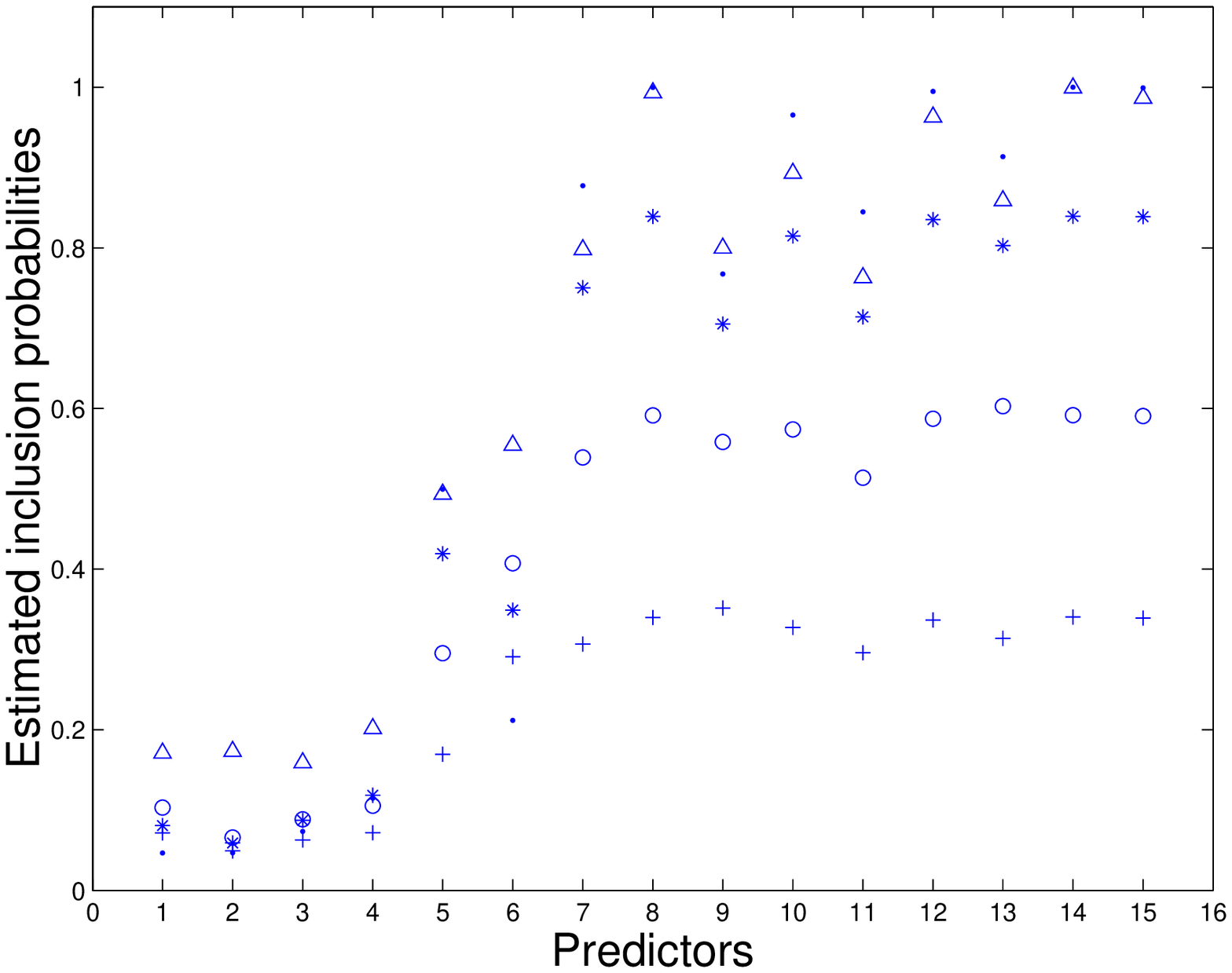,width=2.2in,height = 2.2in,angle=0}
   \epsfig{file=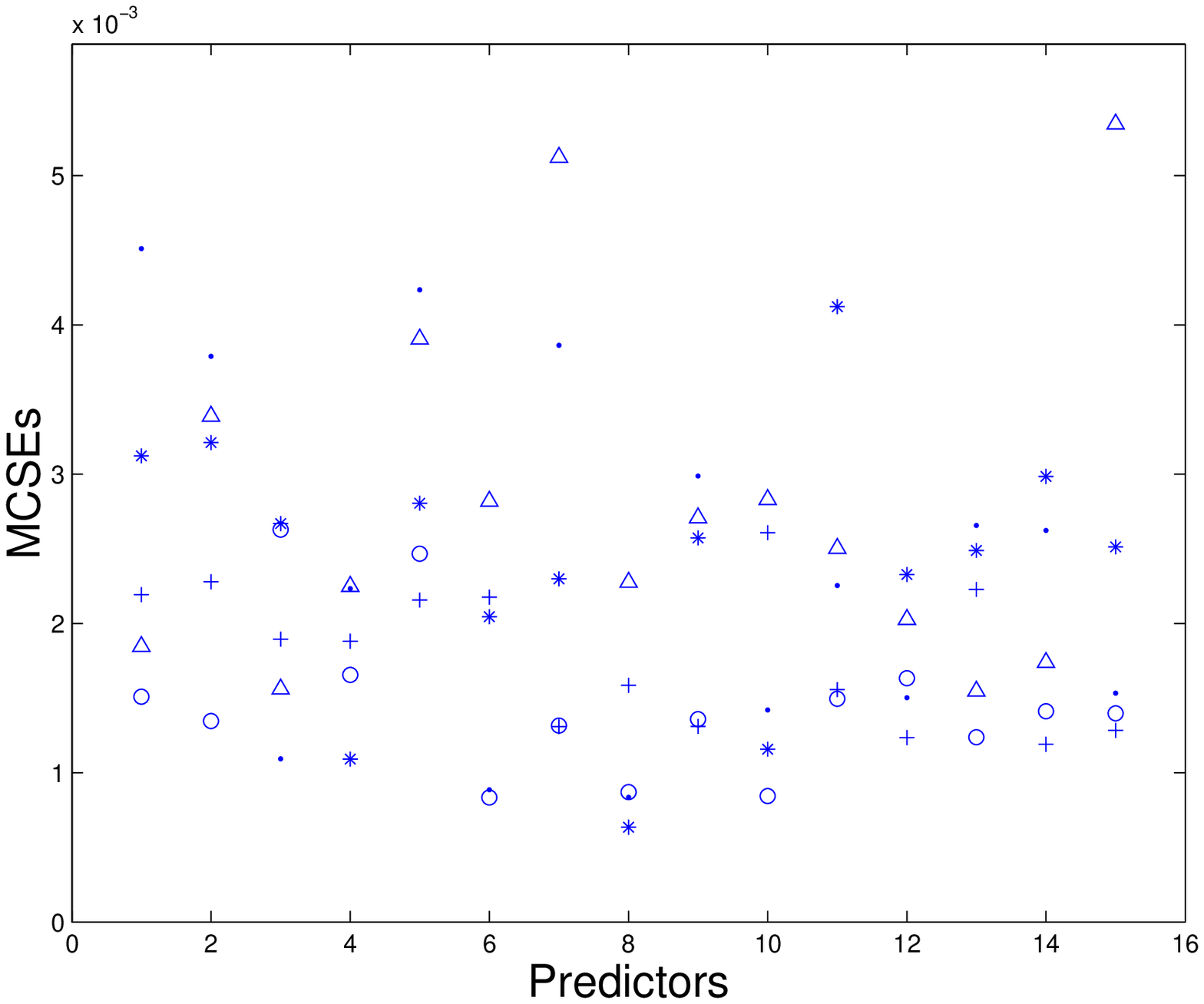,width=2.2in,height = 2.2in,angle=0}
   \epsfig{file=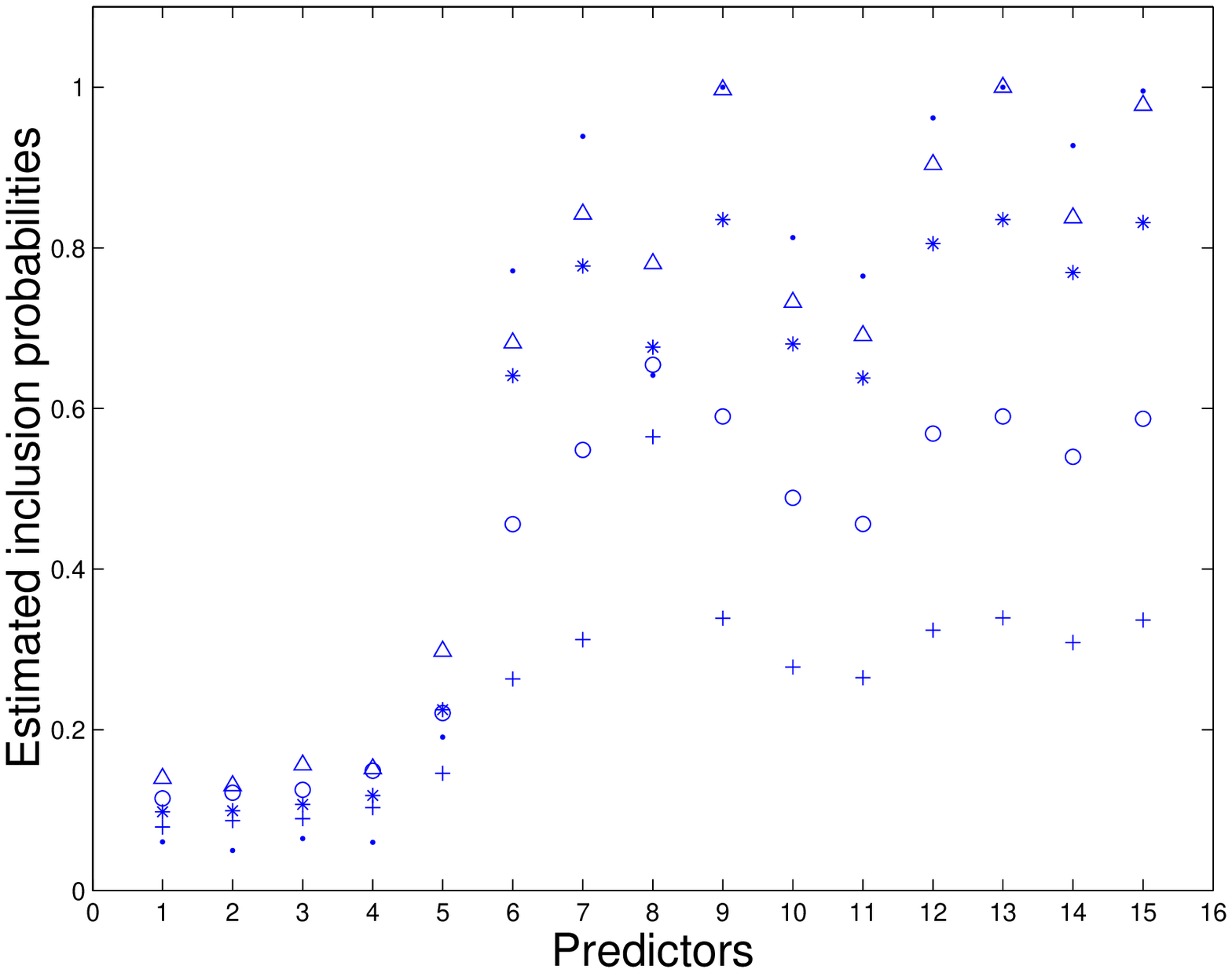,width=2.2in,height = 2.2in,angle=0}
   \hspace{0.5in}
   \epsfig{file=mcses_ncoll.eps,width=2.2in,height = 2.2in,angle=0}
    \caption{estimated marginal posterior inclusion probabilities for the fifteen predictors (left) and their Markov chain
     standard errors (right). The plots on top refer to the simulated data without collinearity and the bottom plots refer to the data
     with collinearity. In all plots, plus signs correspond to the results of the PT algorithm with $s = 0.8$, circles correspond to PT
    with $s = 0.5$ and asterisks mark the PT results with $s = 0.2$. Triangles identify the results of the PHS algorithm and dots mark
    those of MH. The PHS estimates for the marginal inclusion probabilities are the highest for the predictors associated to low values of their
     regression coefficients $\beta_{\gamma}$, whereas high swap rates for the PT algorithm produce wrong estimates of the inclusion probabilities
     for the predictors associated to high values of $\beta_{\gamma}$. The precision of the three algorithms, as measured by their MCSEs,
     appears to be comparable.}
  \end{figure}
\end{center}
\section{Application to the estimation of the structure of a survival CART model}
In regression and classification  trees (CART) the sample is clustered
in disjoint sets  called leaves. The leaves are  the final nodes
of a single-rooted  binary partition of the covariates  space which
we will refer  to as  the tree structure.   Within each  leaf, the
response  variable is  modeled according to  the regression,  or
classification or with the survival analysis frameworks (\cite{Breim}).
Bayesian  CART models  appeared in  the literature  with the papers of
\cite{chipman98bayesian} and \cite{Denison98}.   The MCMC model search
algorithms developed  in these two papers treat  the tree structure
as   an  unknown   parameter  and   explore  its   marginal  posterior
distribution using the Gibbs sampler and the MH algorithm.  In this example we focus on tree
models for randomly right-censored survival data (\cite{GordonOlshen},
\cite{DavisAnderson}, \cite{leblanch1},  \cite{leblanch2}).  The first
Bayesian survival tree model  has been proposed by \cite{MikeJen}, who
adopted a Weibull  leaf sampling density and a  step-wise greedy model
search algorithm  based on the  evaluation of all the  possible splits
within each node. The main strength of this model is that it incorporates
a flexible parametric form for the survival function and that the
tree search quickly converges to a mode in the model space.  In this Section, we propose  a fully Bayesian
analysis of the marginal posterior distribution over the space of tree
structures using PHS under the Weibull leaf likelihood.
Upon convergence, besides providing the structure of the estimated modal tree, the realisations of
the  cold   chain  provide  a  sample  from   the  marginal  posterior
distribution of the tree structure which can be employed to rank
the combinations  of the  covariates defining the  visited trees  as a
function  of their  estimated   marginal    posterior   inclusion
probabilities.
\subsection{Tree structure marginal posterior distribution}
Let  the survival times  $\{t_{j}\}_{j=1}^{n}$  be independent random variables
conditionally on  the tree  structure $(b,\zeta)$ and on  the Weibull leaf  parameters $(\alpha_{\zeta},\beta_{\zeta})$.
Under  this assumption,  the  joint sampling  density  of the
survival times can be written as
\begin{eqnarray}
f(t \mid X,\delta,b,\zeta,\alpha_{\zeta},\beta_{\zeta}) = \prod_{k = 1}^{b}\prod_{j = 1}^{n}\left(\left(\alpha_{k}\beta_{k}t_{j}^{\alpha_{k}-1}\right)^{\delta_{j}}e^{-\beta_{k}t_{j}^{\alpha_{k}}}\right)^{1_{k,j}},
\end{eqnarray}
where $b$ is the number of leaves, $\delta_{j}$ takes value $1$  for exact observations and $0$ for
right   censored   observations   and   $1_{k,j}  =   1_{\{X_{j}   \in
\zeta_{k}\}}$  is  $1$ if the
covariate  profile of  the $j$th  sample unit is included in $\zeta_{k}$, which is
the subset of the covariate space corresponding to  leaf $k = 1,...,b$, and $0$ otherwise.
Under a discrete uniform prior for the tree structure, the marginal
posterior probability  $P(b,\zeta \mid t, X,\delta)$  can be obtained,
up  to  a  multiplicative  constant,  by  integrating
$(11)$ with respect to the conditional prior distribution for the array of
leaf  parameters $(\alpha_{\zeta},\beta_{\zeta})$.   In  this work  we
place independent uninformative priors for each Weibull leaf parameter.
\cite{Sun}  provides an accurate
study of different uninformative  priors for the Weibull distribution.
In particular, Sun's paper indicates that $h_{k}(\alpha_{k},\beta_{k}) = 1/\alpha_{k}
\beta_{k}$ is the  Jeffreys' prior and the first  order matching prior
for the Weibull parameters of leaf $k$ given the parametrization $(11)$.  Sun  also shows
that, under $h_{k}(\alpha_{k},\beta_{k})$, the joint posterior density for
the leaf parameters $(\alpha_{k},\beta_{k})$
is proper when the number of data points is larger than  one and if all the observations falling
in leaf $k$ are  not equal. For this specification of the prior structure, the joint posterior of
the tree structure and of the leaf parameters can be written as
\begin{eqnarray}
f(b,\zeta,\alpha_{\zeta},\beta_{\zeta} \mid t,X,\delta) \propto \prod_{k = 1}^{b}\frac{1}{\alpha_{k} \beta_{k}}\prod_{j = 1}^{n}\left(\left(\alpha_{k}\beta_{k}t_{j}^{\alpha_{k}-1}\right)^{\delta_{j}}e^{-\beta_{k}t_{j}^{\alpha_{k}}}\right)^{1_{k,j}}.
\end{eqnarray}

The  Weibull   scale  parameters  $\beta_{k}$  can  be
integrated  out of  the joint  posterior $(12)$ analytically.   The resulting
integrated posterior density is
\begin{eqnarray}
f(b,\zeta,\alpha_{\zeta} \mid t, X,\delta) \propto \prod_{k}
\frac{\Gamma(\sum_{j}\delta_{j}1_{k,j})\alpha_{k}^{\sum_{j}\delta_{j}1_{k,j}-1}e^{(\alpha_{k}-1)\sum_{j}\delta_{j}\log(t_{j})1_{k,j}}}{(\sum_{j}t_{j}^{\alpha_{k}}1_{k,j})^{\sum_{j}\delta_{j}1_{k,j}}}.
\end{eqnarray}

Under $(13)$,  analytical integration of the  Weibull index parameters
$\alpha_{k}$ is not  possible.  For any given model, the Monte
Carlo   method  of   \cite{Chib}   can  be   employed   to compute   a
simulation-based approximation of  the marginal posterior probability
of the tree structure.
However, since this integration needs  to be performed for each visited
tree, the computational cost of Chib and Jeliazkov's method makes it unsuitable
for any iterative  model search.  In  this work  we
approximate  the tree structure  marginal posterior  probability using
the  Laplace  expansion  of  equation  $(13)$, which can be written as
\begin{eqnarray}
P(b,\zeta \mid t, X,\delta) \approx exp \left(b\frac{ \log(2\pi)}{2} + \sum_{k=1}^{b}\left( \log(l(\hat{\eta}_{k})) -\frac{\log(-l_{2}(\eta_{k}))\vert_{\hat{\eta}_{k}}}{2}\right)\right),
\end{eqnarray}
where $\hat{\eta}_{k}$ is the posterior  mode of the Weibull leaf  log index parameter
$\eta_{k} =  \log(\alpha_{k})$.
The derivation of equation $(14)$ and the explicit forms of the functions
$l(\eta)$ and $l_{2}(\eta)$ are reported in the Appendix.
\subsection{Marginal posterior inference for the tree structure}
In the  CART framework, as in the clustering problem illustrated in Section $3$,
the main challenge for constructing
efficient within-chain  proposal distributions is  the  lack  of a
distance  metric between  different models.   This issue has been also noted  by
\cite{Brooks}  in the context of the reversible  jump  MCMC  algorithm
(\cite{Green}).  Our   specification  of  the within-chain  proposal  distribution
generalizes     the     approaches     of     \cite{Denison98}     and
\cite{chipman98bayesian}    by   devising  two  additional   within-chain
transitions besides their \emph{insert}, \emph{delete} and \emph{change} moves. For the  within-chain updates
we propose a  transition  at random  among  the following five types:
\begin{itemize}
\item[1)] Insert: sample a leaf at random and insert a
  new split by randomly selecting a new splitting rule.
\item[2)] Delete: sample at random a leaf pair with common parent and at most one child split and delete it.
\item[3)] Change: resample at random one splitting rule.
\item[4)] Permute: sample a random number of splits
  and permute at random  their splitting rules.
\item[5)] Graft: sample at random one
  of the tree branches and graft it to one of the leaves of a different branch.
\end{itemize}
\cite{chipman98bayesian} noted that their  MCMC algorithm can effectively  resample the splitting
rules of nodes close to the  tree leaves but the rules defining splits
close to  the tree root are  seldom replaced. In  our specification of
the  within-chain  transitions, move  number $4$ aims at
improving  sampling of the  splitting rules  at all  levels of  the tree
structure.  Furthermore,  the fifth  move  type allows  the sampler  to
jump  to a  tree structure distinct  from the  current one
without changing its splitting rules.

Having  adopted a multiple-chains algorithm,  we also devised
two types of cross-chains  transitions. The first  is the
cross-chains version of the insert, graft and change transitions, swapping
the elements of the tree structure required to perform
corresponding pairs of transitions across chains. The second
class of cross-chains transitions includes a whole
tree swap  between chains. 

At iteration $i$, the PHS algorithm for this example proceeds as follows:
\begin{itemize}
\item[1)] choose at random one of the heated chains $m_{i} \in [2,M]$
  and propose at random
      one of the cross-chains moves, accepting the swap with
  probability $1$.
\item[2)] update each of the remaining $M-2$ chains independently using the
five types of within-chain transitions and the MH acceptance probability.
\end{itemize}
\subsection{Analysis of a set of cancer survival times}
Colorectal  adenocarcinoma ranks  second as  a cause  of death  due to
cancer in the western world and  liver metastasis is the main cause of
death in  patients with colorectal  cancer (\cite{PasettoMonfardini}).
The survival times of $622$ patients
with liver  metastases from a colorectal primary  tumor were collected
along with  their clinical  profiles by the  International Association
Against   Cancer  (\texttt{http://www.uicc.org}).    Table  $1$   reports   a
description of  the nine available clinical covariates.  The survival  times of this
dataset  are currently included  in the  \textbf{R} library  \emph{locfit}
(\texttt{http://www.locfit.info}). This
data  has  been   analyzed  by  \cite{Hermanek}  using  non-parametric
methods,  by \cite{Antoniadis}  using their  wavelet-based  method for
estimating the survival density  and the instantaneous hazard function
and by \cite{KottasMixWeib}, who  employed a Dirichlet process mixture
of Weibull distributions to derive a Bayesian non-parametric estimate
of   the    survival   density    and   of   the    hazard   function.
\cite{HauptMansmann}   employed  this   dataset   to  illustrate   the
non-parametric tree fitting techniques for survival data implemented in
the \texttt{S-plus} function \emph{survcart}.  The aim of this Section
is showing  that the estimates  of $(b,\zeta)$ obtained using  the PHS
algorithm  and the approximate  marginal posterior $(14)$
provide meaningful inferences for the prognostic significance  of the
available  covariates.   For  each   covariate,  the  latter  will  be
represented by its estimated  posterior inclusion probability, i.e. by
the proportion of sampled models which structure depends on the covariate.
\begin{table}[htbp]
\begin{center}
  \begin{tabular}{|c|c|c|c|}    \hline
&\textbf{Name}& \textbf{Symbol} & \textbf{Description} \\\hline
$1$ & Diam. largest LM & DLM& $(1,20)$mm\\\hline
$2$ & Age &  AGE & $(18,88)$years\\\hline
$3$ & Diagnosis of LM & TD & synchrone/metachron with CPT\\\hline
$4$ & Gender & SEX & M = $55.8\%$, F = $44.2\%$\\\hline
$5$ & Lobar involvement& LI & unilobar/bilobar\\\hline
$6$ & Number of LM& NLM & $(1,20+)$\\\hline
$7$ & Locoregional disease& LRD &yes/no\\\hline
$8$ & Metastatic stage & TNM & local/regional/distant\\\hline
$9$ & Location PT& LOC &colon/rectum\\\hline
\end{tabular}\\
\caption{description of the covariates for the liver dataset.
The data include several types of clinical covariates, such as continuous (DLM),
discrete (AGE, NLM) and categorical (all others).}
\end{center}
\end{table}
A  PHS using twenty  parallel  chains  was  run  for  fifty  thousand
iterations, the  starting tree  for each chain  being the  root model.
Consistently with  the PHS algorithm  described in Section  $5.2$, for
this analysis we used a uniform swap proposal distribution $q_{s}(\cdot)$.
On the top row, Figure $6$  shows the unnormalised log posterior  tree probability for
the  models visited by the cold chain,  plotted respectively  versus the
iteration index and versus their number of leaves.  The  posterior sampling  for the  cold chain
moved quickly towards  areas of  high marginal  posterior probability  models, which
leaf range is  $10-14$, the best tree having  $12$ leaves. The bottom plot of Figure $6$
shows the estimated  marginal  posterior inclusion  probabilities for all the
covariates. According to these estimates, the covariates with maximal
prognostic significance are  the  diameter  of   the  largest  liver  metastasis
and the number   of  liver metastases, followed by their locoregional disease status, the patients'age
at diagnosis, the localisation  of
their  primary tumor and their lobar  involvement  status. The estimated inclusion probabilities
of the remaining covariates suggest that, for this sample, their values do
not discriminate among significantly different survival clusters.
\begin{center}
  \begin{figure}[htbp]
    \epsfig{file=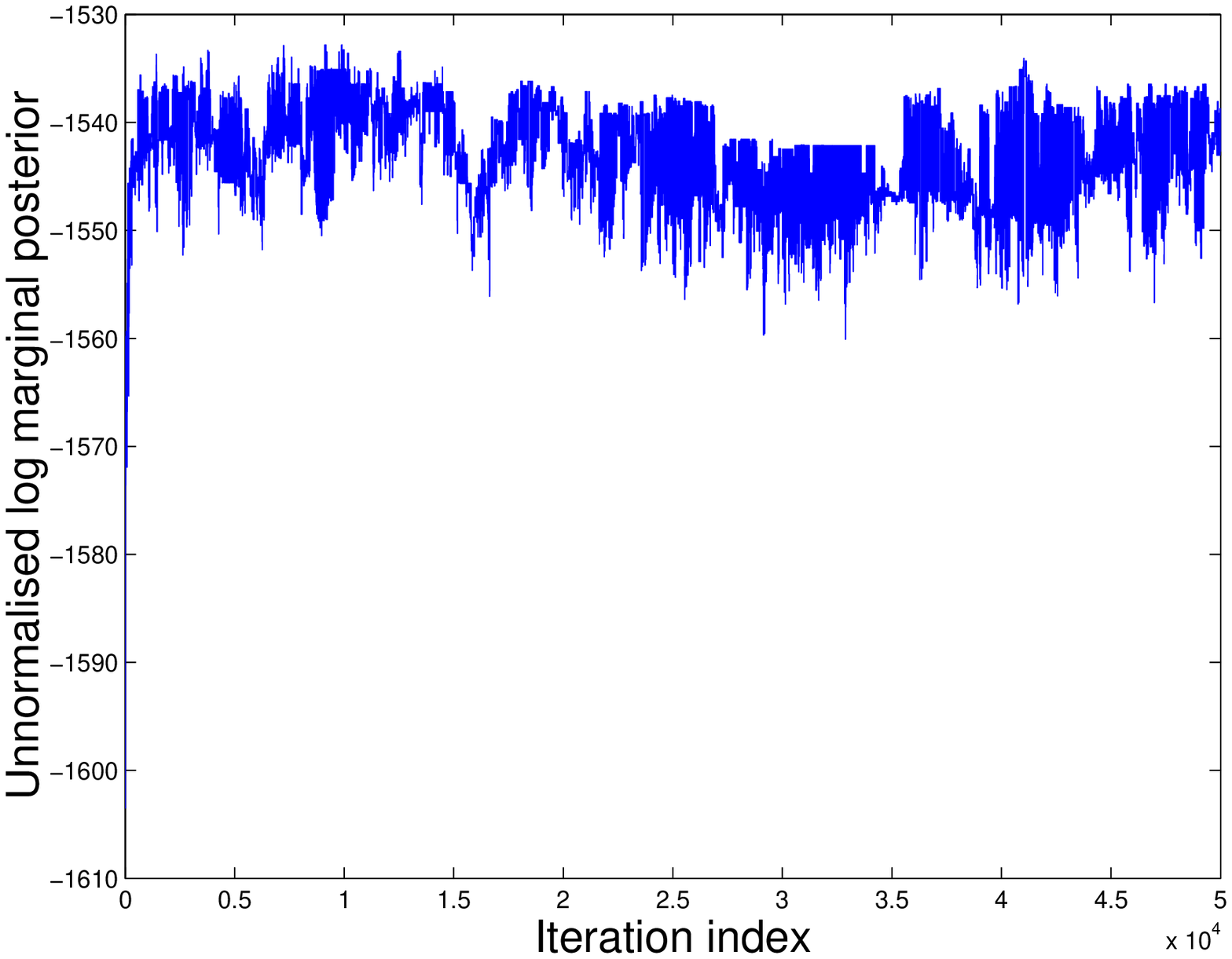,width=2.5in,height = 2.5in,angle=0}
    \epsfig{file=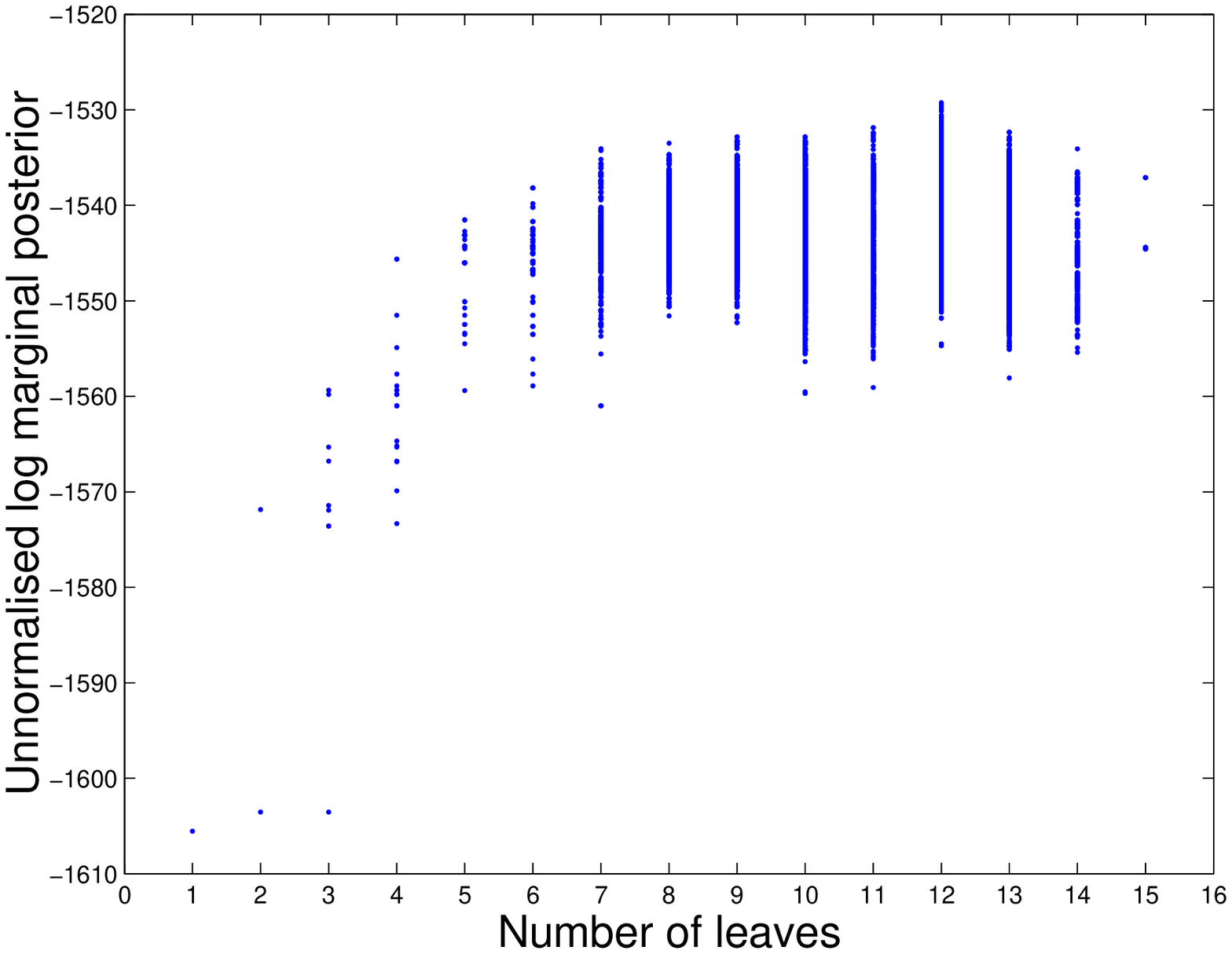,width=2.5in,height = 2.5in,angle=0}
    \epsfig{file=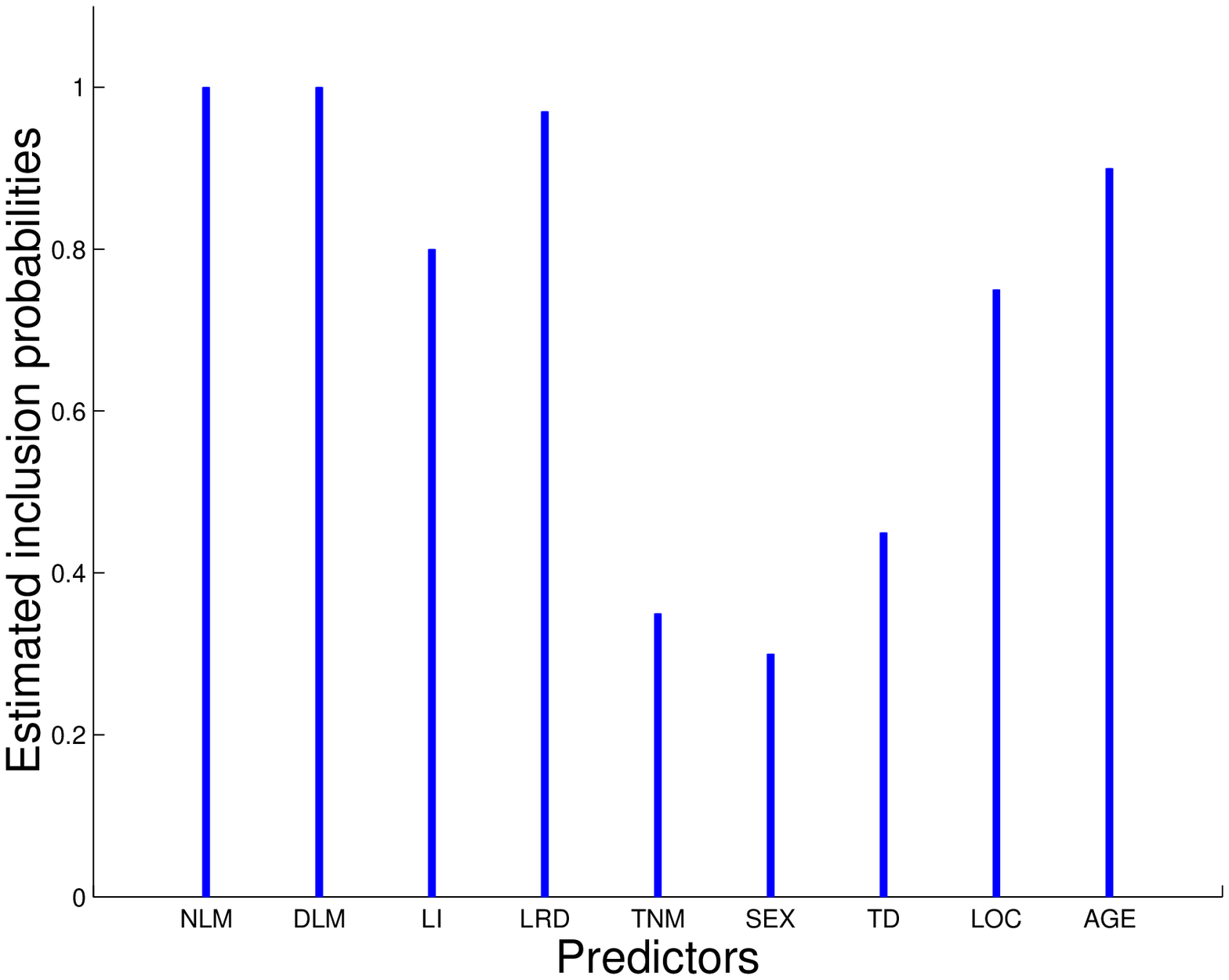,width=2.5in,height = 2.5in,angle=0}
    \caption{the top plots show the unnormalised log marginal posterior probability of the tree structure for the models
     visited along the PHS posterior simulation by the cold chain. The horizontal axis in the left plot
     represents the iteration index, whereas in the right plot it represents
     the number of leaves of the corresponding tree. The plot on the bottom
     shows the estimated marginal posterior inclusion probabilities for the nine 
     covariates. Those with maximal prognostic significance are  the  diameter  of   the  largest  liver  metastasis,
     the number of  liver metastases, the locoregional disease status, the patients'age
     at diagnosis, the localisation  of their  primary tumor and their lobar  involvement  status.}
  \end{figure}
\end{center}
Figure   $7$  shows  the structure of the estimated  modal
posterior tree.  The depth of the  leaves in this  figure reflects the
number  of splits  required to  generate them.
For any given tree structure $(b,\zeta)$, posterior inferences
for $(\alpha_{\zeta},\beta_{\zeta})$ can be obtained by further simulation
using their full conditional posterior distributions, which can be
easily derived from the full conditional posterior $(12)$. In order to maintain the
focus of this Section on the application of PHS for tree selection, here
we adopt the Kaplan-Meier (KM) survival curves as non-parametric estimates
of the survival probabilities within each leaf.
Table $2$ reports the number of patients clustered within
each leaf of the modal tree and the values  of their KM survival probabilities at $12$,
$24$  and $36$  months.   The  bold figures  in  Table $2$  correspond
respectively  to  the highest and  to  the lowest estimated  survival
probabilities at the three time points. The lowest survival correspond
to leaves number  $1$ and $2$ for all the three time points.
These two groups are defined by high values of the first two
covariates  in   the  tree, which are the diameter of the largest liver metastasis and
the number of liver metastases.  The highest  estimated  survival
probabilities at $12$, $24$ and $36$ months correspond to  leaf number  $8$  which is
characterised by at most one liver metastasis of small diameter, local
spreading of the cancer and no locoregional disease.
\begin{table}[htbp]
\begin{center}
  \begin{tabular}{|c|c|c|c|c|}    \hline
 Leaf Number & Cluster size & $\hat{P}_{km}(t \geq 12)$ & $\hat{P}_{km}(t \geq 24)$ & $\hat{P}_{km}(t \geq 36)$\\\hline
$1$ & $ 63$ & $ 0.70 $ & $0.23$ & $ \mathbf{0.03}$\\\hline
$2$ & $ 148 $ & $\mathbf{0.58} $ & $\mathbf{0.14} $ & $ \mathbf{0.04}$\\\hline
$3$ & $ 34 $ & $ 0.75 $ & $0.53$ & $ 0.43$\\\hline
$4$ & $ 78 $ & $ 0.85 $ & $0.50 $ & $0.28 $\\\hline
$5$ & $ 42 $ & $ 0.85$ & $ 0.50 $ & $0.16$\\\hline
$6$ & $ 48 $ & $ 0.80$ & $ 0.57$ & $ 0.26$\\\hline
$7$ & $ 31 $ & $ 0.90$ & $ 0.81$ & $ 0.64$\\\hline
$8$ & $ 42 $ & $ \mathbf{1.00} $ & $\mathbf{0.84}$ & $ \mathbf{0.77}$\\\hline
$9$ & $ 30$ & $  0.69$ & $ 0.25$ & $ 0.19$\\\hline
$10$ & $ 32$ & $ 0.65$ & $ 0.30$ & $ 0.10$\\\hline
$11$ & $ 42$ & $ 0.68 $ & $0.59$ & $ 0.29$\\\hline
$12$ & $ 31$ & $ 0.97$ & $ 0.76$ & $ 0.29$\\\hline
\end{tabular}\\
\caption{number of observations falling in each leaf
         of the estimated posterior modal tree and Kaplan-Meier
         survival probabilities at $12$, $24$ and $36$ months. The highest
         survival probabilities correspond to leaf number $8$, which is
         defined by a low number of local metastases of small diameter and
         by the absence of locoregional disease. The lowest survival probabilities
         correspond to leaves $1$ and $2$, which are characterized respectively by
         metastases of large diameter and by a large number of smaller metastases.}
\label{tab:srules}
\end{center}
\end{table}
\begin{center}
  \begin{figure}[htbp]
   \epsfig{file=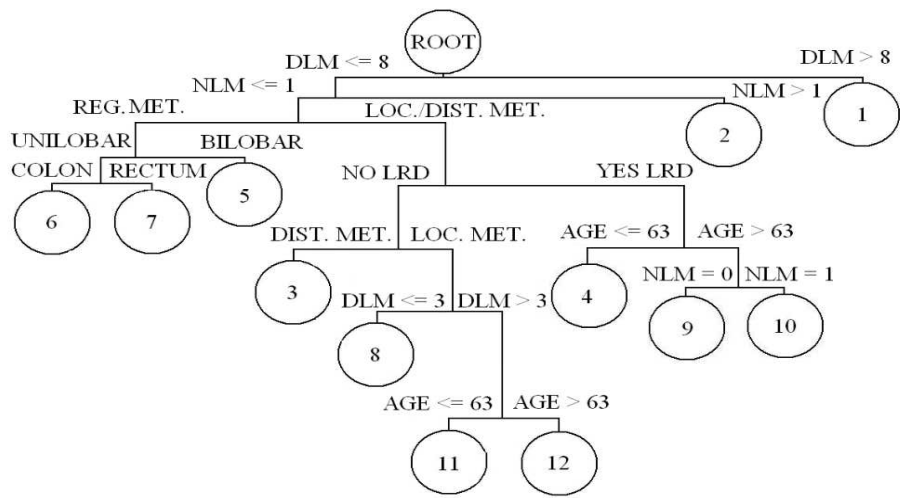,width=2.6in,height = 5in,angle=-90}
    \caption{tree structure of the estimated modal tree found by the PHS simulation. The lowest estimated
    survival probabilities at $1$, $2$ and $3$ years correspond to leaves number $1$ and $2$, which are
    defined by high values of the two key covariates (NLM,DLM), whereas the
    best estimated survival corresponds to leaf number $8$, which is defined by the non-linear
    interaction of (DLM,NLM,LRD and LOC).}
  \end{figure}
\end{center}
\section{Discussion}
This paper presents a novel multiple chains algorithm for Markov chain Monte Carlo
inference, which we labeled parallel hierarchical sampler.
We emphasized the main terms of comparison to evaluate the parallel hierarchical sampler, that
are the Metropolis algorithm, the muliple-try Metropolis algorithm of
\cite{mtry} and parallel tempering. Whilst 
in the Metropolis sampler high acceptance rates usually produce
high autocorrelations and slow convergence (\cite{Roberts97}), by
construction PHS produces a chain which always moves but which
exhibits low serial dependence. As it can be seen be comparing the PT
and PHS joint transition kernels, reported by equations $(6)$ and $(8)$,
the first advantage of PHS with respect to PT is that its swap
proposal has a simpler form. This is because in PHS both types
of transitions are performed at each iterations instead of
being sampled according to the distribution $q_{s}^{'}(s_{i} \mid s_{i-1})$.
The second practical advantage of PHS with respect to PT
is that its implementation does not require choosing a set of
temperature values. 
When little is known about the equilibrium
distribution being studied, we regard this feature of PHS as a
potentially major advantage.
Furthermore, although the algorithm presented
in Section $2$ allows for a different proposal distribution within
each of the chains indexed $m = 2,...,M$, this is not even a necessary
requirements for the implementation of PHS.

As pointed out by \cite{Geyer},
the attractive feature of multiple-chains MCMC samplers such as PT and PHS
is that their target distribution factors into the product of the marginal
distributions for each chain despite the fact that these chains are made
dependent by the swap transitions. Under the conditions
of Section $2$ we prove in Theorem $1$ that the samples generated
by the PHS algorithm converge weakly to such product distribution.

In Section $2$ we also noted that the complexity of multiple-chains transition kernels,
which for PT and PHS are mixtures of their marginal
transition kernels, largely prevents a direct analytical comparison
of their convergence properties. Direct comparison of the transition
kernels $(6)$ and $(8)$ leads to major analytical difficulties
and so far it has not been possible to establish an ordering
between the two kernels using the criteria illustrated in \cite{Peskun}, \cite{MT} and
\cite{Mira2}.  Although it falls
beyond the scope of this paper, we regard
the development of computable ordering criteria for mixtures of Markov chain
transition kernels as a key area which deserves further investigation.

Following \cite{Huel}, the last three Sections of this paper emphasise the relevance of
multiple-chains MCMC algorithms for estimating the posterior model probabilities in a
variety of settings. In Sections $3$ and $4$ we provide
two examples comparing numerically the inferences obtained by the
Metropolis-Hastings algorithm with those of PT and PHS.
In these two Sections we focussed on comparing
the posterior inferences without considering the computational time
required to produce them. The rationale behind this choice is that
the computational time required by multiple-chains samplers is highly
dependent on the available computational resources. For instance, if
all chains used by PT and PHS are run in parallel on several
processors, their run time may be comparable to that of the single-chain
Metropolis-Hastings sampler, whereas if all chains are updated sequentially
using one processor their run time is, of course, much longer.

In Section $3$ we observed that for the Gaussian clustering
the PHS algorithm appears to explore
the space of cluster configurations more effectively with respect to
MH and PT using the same within-chain proposal mechanism for all
samplers.
For the example of Section $4$ we observed that using high swap proposal rates for the PT sampler
leads to wrong estimates of the marginal posterior inclusion probabilities
with and without collinearity among the predictors $X$. However, by
measuring the precision of the three MCMC algorithms by their
Markov chain standard errors we did not find a significant advantage
of the multiple-chains samplers with respect to the MH algorithm.

Section $5$ illustrates the application of PHS for deriving
inferences for the structure of a treed survival model.
One of the main differences between
of Sections $4$ and  $5$ is that
the focus of the former is the selection of the relevant main
regression effects
whereas in the latter the key elements defining different survival
groups are the non-linear interactions among the predictors defining the tree structure.
The top-right plot in Figure $6$ shows that the approximated marginal posterior probability of the tree structure
under the Weibull model does not increase monotonically with the number of leaves.
Under the Weibull model we used the Laplace expansion to approximate the tree
structure marginal posterior probability. The Schwarz approximation
(\cite{Schwarz}) was also considered. The penalty term of the  Schwarz approximation
increases  with the  model dimension,  thus it  represents a  cost for
complexity  factor.   However,  given  a  fixed  number of leaves  this
approximation favours trees allocating the data more
unevenly   across   leaves.     Therefore,   employing   the   Schwarz
approximation when many covariates are available might  result in assigning
significant  posterior  probability  to  large and  unbalanced  trees,
leading  to overfitting small  groups of  survival data. On the other hand, the penalty
associated to  the Laplace approximation has a  complex form involving
the tree size, the log  Weibull parameters $\eta_{i}$ and the survival
times  along  with their  censoring  indicators.   Evaluation of  this
approximation  for a  variety  of tree  structures  showed that  this
penalty is strictly increasing with the tree dimension but it does not
favour   unbalanced   trees.   Using the Laplace expansion to
approximate the model's marginal posterior probability
and the PHS algorithm to sample from it, we find meaningful
posterior inferences for a set of colorectal cancer survival data.
The estimated modal tree separates the short-term
survivors, who are characterised by a large number of liver metastases
of large size, from the long-term survivors, who present a few local
metastases of small size without further symptoms.

Finally, in Sections $3$, $4$ and $5$ we addressed qualitatively the issue of
convergence of the chains produced by the three MCMC algorithms
by considering their acceptance rates
and the fluctuations of their marginal posterior probabilities. Being the model spaces inherently non-metric, it was not possible to
use the state-dependent criteria commonly used  to assess
the convergence of Markov chains to their stationary distributions
illustrated in \cite{Carlin}, \cite{RGG}
or in \cite{RobDiag} among others. Although it
is beyond the scope of this paper, in light of the increasing relevance
of model selection problems we consider the development of
appropriate convergence measures an important field for future research.
\section*{Appendix}
The Laplace approximation is the  second order Taylor expansion of the
logarithm of the integrated  posterior $(14)$ around its posterior  mode.  In order
to  derive  the  approximation,  it  is  convenient  to  parametrize
equation $(14)$ as a function of $\eta_{k} = \log(\alpha_{k})$, so that
the    variables     to    be    integrated     out    have    support
 on the real line.  Under this  parametrization, stable  estimates of
the  posterior  modes  $\{\hat{\eta}_{k}\}_{k=1}^{b}$ can  be  computed
numerically.   For   each  leaf  the  log  integrated
conditional posterior is
\begin{eqnarray*}
l(\eta_{k}) &\propto& \log(\Gamma(\sum_{j}\delta_{j}1_{k,j})) + (\eta_{k}-1) \sum_{j}\delta_{j}1_{k,j} + e^{\eta_{k}}\sum_{j}\delta_{j}\log(t_{j})1_{k,j} \\
&-& \sum_{j}\delta_{j}1_{k,j}\log(\sum_{j}t_{j}^{e^{\eta_{k}}}1_{k,j}).\nonumber
\end{eqnarray*}

The penalty arising from  the Laplace approximation is proportional to
minus the  logarithm of  the second derivative  of the  log integrated
posterior taken  with respect  to the leaf  parameters $\{\eta_{k}\}$.
The second derivative of the function $l(\eta_{k})$ is
\begin{eqnarray*}
l_{2}(\eta_{k}) =  e^{\eta_{k}} \left(\sum_{j}\delta_{j}\log(t_{j})1_{k,j}
- \frac{\sum_{j}\delta_{j}1_{k,j} \sum_{j}t_{j}^{e^{\eta_{k}}}(\log(t_{j}))^{2}1_{k,j}}{\sum_{j}t_{j}^{e^{\eta_{k}}}1_{k,j}}\right).
\end{eqnarray*}
Summing the approximation over the $b$ leaves yields the right-hand side of equation $(15)$.
\small
\bibliography{Rigat_phs}
\bibliographystyle{plainnat}
\section*{Acknowledgements}
The author thanks Richard Gill, Antonietta Mira and Gareth Roberts for many helpful comments
on an earlier version of this manuscript and Mike West, who provided
the essential motivation for the development of the model in Section $5$.
The software implementing  the models and the MCMC algorithms employed
in Sections $3$, $4$ and $5$ can be obtained in the form of three \texttt{MATLAB}
modules upon request to the author.
\end{document}